\numberwithin{equation}{section}
\newcommand{\bea}{\begin{eqnarray}}
\newcommand{\eea}{\end{eqnarray}}
\newcommand{\be}{\begin{eqnarray*}}
\newcommand{\ee}{\end{eqnarray*}}
\newtheorem{theorem}{Theorem}[section]
\newtheorem{definition}{Definition}[section]
\newtheorem{proposition}{Proposition}[section]
\newtheorem{example}{Example}[section]
\newtheorem{algorithm}{Algorithm}[section]
\begin{document}
\title[Dynamics of Boolean Networks]{Dynamics of Boolean Networks}
\author[Yi Ming Zou]{Yi Ming Zou}
\address{Department of Mathematical Sciences, University of Wisconsin, Milwaukee, WI 53201, USA} \email{ymzou@uwm.edu}
\subjclass{Primary: 94C10; Secondary: 05C38.} 
\keywords{Boolean networks, disjunctive normal form, stable states, limit cycles, gene regulatory networks,  algorithms.}
\maketitle
\date{}

\begin{abstract}
Boolean networks are special types of finite state time-discrete dynamical systems. A Boolean network can be described by a function from an $n$-dimensional vector space over the field of two elements to itself. A fundamental problem in studying these dynamical systems is to link their long term behaviors to the structures of the functions that define them. In this paper, a method for deriving a Boolean network's dynamical information via its disjunctive normal form is explained. For a given Boolean network, a matrix with entries $0$ and $1$ is associated with the polynomial function that represents the network, then the information on the fixed points and the limit cycles is derived by analyzing the matrix. The described method provides an algorithm for the determination of the fixed points from the polynomial expression of a Boolean network. The method can also be used to construct Boolean networks with prescribed limit cycles and fixed points. Examples are provided to explain the algorithm. 
\end{abstract}
\section{Introduction}
\par
\medskip
Boolean networks have a wide range of applications, such as in computer science, engineering, computational biology, physics, and psychology \cite{Bar3, Davi1, Dub2, Ilic, Kauf1, Mol, Rege, Riel}. To facilitate the study of Boolean networks, in particular, to use the fast developing tools in computational algebra, one uses a polynomial function $f: F_2^n\longrightarrow F_2^n$ to represent a Boolean network, where $F_2=\{0,1\}$ is the field of two elements and $F_2^n$ is the $n$-dimensional vector space over $F_2$. The elements of $F_2^n$ are called {\it states}. The dynamics of the system is obtained by iterating the function $f$. The {\it state space} $S(f)$ of $f$ can be represented by a directed graph defined as follows. The vertices of $S(f)$ are the elements of $F_2^n$. There is a directed edge $a\longrightarrow b$ in $S(f)$ if $f(a) = b$. A directed edge from a vertex to itself is admissible and is called a {\it loop}. Thus, $S(f)$ encodes all state transitions of $f$, and has the property that every vertex has out-degree exactly equal to 1. Each connected graph component of $S(f)$ consists of a directed cycle called a {\it limit cycle}, with a directed tree attached to each vertex in the cycle, consisting of the {\it transients} \cite{Col1}.
\par
\medskip
When the state space is small, i.e., $n$ is small, the enumeration of the state space is an effective and intuitive way to analyze a Boolean network. If the state spaces are large, such as those appear in the modeling of complex biological systems, enumerations of the state spaces are impractical, and it is desirable to find ways to link the structure of the Boolean polynomial function $f$ to its dynamics. The study of Boolean functions has a long history and can be dated back to the middle of 19th century when Boole published his books \cite{Boo1, Boo2}. One can find an extensive bibliography in \cite{Rud1}. However, the investigation of the linkage between the structure of a Boolean function and its dynamics, in particular the development of efficient algorithms for handling substantial computations related to real applications, seem to be quite recent \cite{Bric, Col1, Gro2, He, Jar1, Jar2, Jar4, Laub, Tamu1, Zhan}, and so far, few algorithms are available. In general, the problems arise in deriving the information on a Boolean network's long term behavior from the structure of the system's defining function are believed to be NP-hard \cite{Jus1, Tamu1, Zhan}, and we are forced to consider either algorithms that are not necessary polynomial time nature in the dimension (number of parameters) of the space but useful in practice, or to restrict ourselves to some special classes of Boolean networks that we can develop effective approaches for. In \cite{Col1, Jar4} the dependency graphs (network topology) for monomial Boolean networks and conjunctive (or disjunctive) Boolean networks are analyzed to derive the information on the dynamics of these systems. However, to study more general Boolean networks based on the dependency graph seems to be difficult, since the dependency graphs carry insufficient information when functions consist of more than one mix type terms: there are too many ways a function can depend on the same set of inputs. We note that the reverse problem, namely to construct networks for a given dependency graph, was considered in \cite{Jar1, Laub} by using computational algebra tools. The networks considered in \cite{Jar1, Laub} are more general and include Boolean networks as special cases.   
\par
\medskip
In this paper, we describe a method on deriving the information about the fixed points and the limit cycles of a Boolean network from its polynomial function. Our approach is based on the disjunctive normal form of a Boolean function. The method of expressing a Boolean network as a disjunctive normal form has long been used in gating networks and switching functions, in particular in mapping and simplifying Boolean network expressions \cite{Fab}. Using the disjunctive normal form, we give an explicit algorithm on the fixed points. Although the algorithm is not polynomial in the number of variables, it is effective if the support (see definition in section 3) of the polynomial function that defines the Boolean network is relatively small. The described method is also useful when one wants to select a Boolean network to model a system based on experimental evidence, such as the construction of a Boolean network based on prescribed sets of attractors and transients. It should be pointed out that the problem of constructing Boolean networks from prescribed attractor structures was considered in \cite{Pal1} based on the truth table. However, our method emphasizes the rule played by the polynomial functions, and therefore, tools in computational algebra can be integrated into the computations. 
\par 
\medskip 
\section{Some basic properties of Boolean functions}
\par
\medskip
Let 
\be
f=(f_1, f_2,\ldots, f_n): F_2^n\longrightarrow F_2^n
\ee
be a Boolean function, where 
\be
f_i: F_2^n\longrightarrow F_2,\;\;1\le i\le n,
\ee
are the coordinate functions of $f$. It is well-known that all $f_i$ are elements of the Boolean ring 
\be
F_2[x_1,\cdots, x_n]/(x_i^2-x_i, 1\le i\le n),
\ee
i.e., the quotient of the polynomial ring $F_2[x_1,\cdots,x_n]$ by the ideal generated by $x_i^2-x_i, 1\le i\le n$. To simplify our notation, we will denote this Boolean ring by $F_2[x_1,\cdots,x_n]$, and use the following notation: 
\be
\mathbf{x}=(x_1,\ldots, x_n), \quad F_2[\mathbf{x}]=F_2[x_1,\ldots, x_n],\\
\mathbf{x}^{\mathbf{a}} = x_1^{a_1}\cdots x_n^{a_n},\quad\mbox{if $\mathbf{a}=(a_1,\ldots, a_n)$}.
\ee   
We also denote the $k$th iteration of $f$ by $f^k$, i.e.,
\be
f^k=\underbrace{f\circ f\circ\cdots\circ f}_{\mbox{$k$ terms}}, \quad k=1,2,\ldots.
\ee
\par
Note that all monomials of $F_2[\mathbf{x}]$ are square free monomials, i.e., a monomial can only be the form $x_1^{a_1}\cdots x_n^{a_n}$ with $a_i=0, 1$. Thus any $f\in F_2[\mathbf{x}]$ can be expressed in the form
\be
f(\mathbf{x})=\sum_{\mathbf{a}\in F_2^n}c_{\mathbf{a}}\mathbf{x}^{\mathbf{a}},\quad\mbox{where $c_{\mathbf{a}}\in F_2$}.
\ee
\par
The Boolean ring $F_2[\mathbf{x}]$ is also a Boolean algebra with the disjunction (OR) $\vee$, the conjunction (AND) $\wedge$, and the negation (NOT) ${}'$ defined by \cite{Rud1, Stone}
\be
f\vee g = f+g+fg,\quad f\wedge g= fg,\quad f'=1+f,\quad\forall f,g\in F_2[\mathbf{x}]. 
\ee
\par
\par
It should also be noted that some literature write $x^a$ for $x'=1+x$ if $a=0$, but in this paper, $x^0=1$, unless otherwise stated.
\par\medskip
Since the state space $F_2^n$ is a finite set, there is a positive integer $m$ such that $f^{m}$ is stabilized, i.e., $f^{m+r}=f^m,\;\forall r\ge 0$. The set $Y=f^m(F_2^n)$ is termed the {\it stable manifold} of $f$. The following simple observations hold.
\begin{proposition}\label{P1}
The transients of $f$ are given by $F_2^n-Y$. The restriction $f|_Y$ of $f$ to the stable manifold $Y$ of $f$ is a permutation of the set $Y$. 
\end{proposition}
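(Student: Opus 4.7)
The plan is to exploit the stabilization $f^{m+r}=f^m$ (interpreted as equality of images on $F_2^n$, so that $Y=f^m(F_2^n)=f^{m+r}(F_2^n)$ for every $r\ge 0$) in order to show that $Y$ is both forward‑invariant and surjectively self‑mapped by $f$, and then read off everything from this.

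First I would establish $f(Y)=Y$. Writing $Y=f^m(F_2^n)$, I compute $f(Y)=f(f^m(F_2^n))=f^{m+1}(F_2^n)$, which by the stabilization property equals $f^m(F_2^n)=Y$. Thus $f|_Y\colon Y\to Y$ is a well-defined surjection. Because $Y\subseteq F_2^n$ is finite, the pigeonhole principle promotes this surjection to a bijection, i.e.\ $f|_Y$ is a permutation of $Y$. This disposes of the second assertion of the proposition.

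For the first assertion I need to identify $Y$ with the union of the limit cycles (so that its complement is the union of transient trees). One inclusion is immediate from the permutation property just proved: any permutation of a finite set decomposes into disjoint cycles, so every element of $Y$ lies on a cycle of $f|_Y$, hence on a limit cycle of $f$ in $S(f)$. For the reverse inclusion, suppose $z\in F_2^n$ lies on a limit cycle, so $f^p(z)=z$ for some $p\ge 1$. Choosing any integer $j$ with $jp\ge m$ gives $z=f^{jp}(z)\in f^{jp}(F_2^n)=Y$ by stabilization. Consequently the set of cycle vertices coincides with $Y$, and the transient set is precisely $F_2^n\setminus Y$.

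I do not anticipate a genuine obstacle here: the whole argument rests on the elementary pigeonhole observation that a surjection on a finite set is a bijection, combined with the definition of $Y$. The only subtlety worth flagging explicitly is the interpretation of the stabilization statement — what is really used is that the descending chain of images $f(F_2^n)\supseteq f^2(F_2^n)\supseteq\cdots$ stabilizes at $Y$, which follows from finiteness of $F_2^n$ irrespective of whether the functions $f^{m+r}$ and $f^m$ coincide on all of $F_2^n$.
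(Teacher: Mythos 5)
Your proof is correct. The paper itself offers no argument for this proposition---it is presented as a ``simple observation'' immediately after the definition of $Y$---so there is no proof of record to compare against; your write-up supplies exactly the argument one would want: $f(Y)=f^{m+1}(F_2^n)=f^m(F_2^n)=Y$ exhibits $f|_Y$ as a surjection of a finite set onto itself, hence a bijection, and the identification of $Y$ with the union of the limit cycles follows in one direction from the cycle decomposition of that permutation and in the other from $z=f^{jp}(z)\in f^{jp}(F_2^n)=Y$ for any periodic point $z$. Your closing caveat is also a genuine and worthwhile catch: the paper's stabilization claim $f^{m+r}=f^m$ for all $r\ge 0$, read literally as equality of functions, fails whenever $f$ has a limit cycle of length greater than $1$ (the paper's own example $f=x'$ on $F_2$ satisfies $f^{m+1}\ne f^m$ for every $m$); what is true, and what your argument actually uses, is that the descending chain of images $f(F_2^n)\supseteq f^2(F_2^n)\supseteq\cdots$ stabilizes by finiteness, or equivalently that $f^{m+r}=f^m$ holds for $r$ ranging over multiples of the least common multiple of the cycle lengths.
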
 
\par
\medskip
We collect some properties of the Boolean ring $F_2[\mathbf{x}]$ below. These properties are either contained in the standard references about $F_2[\mathbf{x}]$ \cite{Rud1, Stone} or immediate consequences of the basic properties therein, some of them can also be fund in \cite{Bric1}. 
\par
\begin{proposition}\label{P2}
All ideals of $F_2[\mathbf{x}]$ are principal, i.e., $F_2[\mathbf{x}]$ is a PI ring. All non-constant polynomials of $F_2[\mathbf{x}]$ are zero divisors, and hence the only unit is $1$.
\end{proposition}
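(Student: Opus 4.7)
The plan is to exploit the Boolean-ring structure of $F_2[\mathbf{x}]$, namely that every element $f$ satisfies $f^2 = f$. This identity is established in two steps: first each square-free monomial $\mathbf{x}^{\mathbf{a}}$ is idempotent modulo the relations $x_i^2 = x_i$; second, when a sum $\sum_{\mathbf{a}} c_{\mathbf{a}} \mathbf{x}^{\mathbf{a}}$ is squared, every cross term carries a factor of $2$ and therefore vanishes in characteristic two, leaving $\sum_{\mathbf{a}} c_{\mathbf{a}}^2 \mathbf{x}^{2\mathbf{a}} = \sum_{\mathbf{a}} c_{\mathbf{a}} \mathbf{x}^{\mathbf{a}}$. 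I would record this identity up front since it underlies every subsequent step.

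For the principal ideal assertion, I would first observe that the quotient ring $F_2[\mathbf{x}]$ is finite (it has exactly $2^{2^n}$ elements), so every ideal is finitely generated and it suffices, by induction on the number of generators, to show that any two-generator ideal $(f,g)$ is principal. Here the disjunction $f \vee g = f + g + fg$ introduced by the author is the natural candidate generator: a direct computation using $f^2 = f$ yields $f(f \vee g) = f^2 + fg + f^2 g = f + fg + fg = f$, and symmetrically $g(f \vee g) = g$, so $f, g \in (f \vee g)$; the reverse containment $f \vee g \in (f, g)$ is immediate from its definition. Thus $(f,g) = (f \vee g)$, and induction finishes the job.

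For the zero-divisor and unit statement, I would first clarify that ``non-constant'' here means $f \neq 0$ and $f \neq 1$, i.e., $f$ is not in the image of the natural embedding $F_2 \hookrightarrow F_2[\mathbf{x}]$. Given such an $f$, set $g = 1 + f$. Because $f \neq 1$, we have $g \neq 0$, and the idempotency identity immediately gives $fg = f(1+f) = f + f^2 = f + f = 0$, exhibiting $f$ as a zero divisor. A standard argument then forces every unit to be constant: if $uv = 1$ and $uw = 0$ with $w \neq 0$, then $w = (uv)w = v(uw) = 0$, a contradiction. Hence units lie in $\{0, 1\}$, and of these only $1$ satisfies $1 \cdot 1 = 1$.

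I do not anticipate a real obstacle: once $f^2 = f$ is in hand, each assertion reduces to a one-line manipulation. The only subtle point is pinning down the meaning of ``non-constant polynomial'' in the quotient ring and confirming that it is equivalent to $f \notin \{0, 1\}$; I would make this explicit at the outset to forestall any ambiguity.
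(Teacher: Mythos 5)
Your proposal is correct and follows essentially the same route as the paper: the paper's proof is exactly the observation that $\vee_{i=1}^m f_i$ generates the same ideal as $f_1,\ldots,f_m$ (you supply the verification $f(f\vee g)=f$ and the finiteness/induction scaffolding the paper leaves implicit) and that $f(1+f)=0$ for non-constant $f$. The extra details you include --- idempotency, the finiteness of the ring, and the standard unit-versus-zero-divisor argument --- are all correct and merely make explicit what the paper takes for granted.
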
 
\begin{proof} If $f_1, \ldots, f_m$ generate the ideal $I$, then $\vee_{i=1}^mf_i$ also generates $I$. If $f$ is a non-constant polynomial, then so is $1+f$, and $f(1+f)=0$.
\end{proof}
\par
\medskip
Recall that a partial relation $\le$ can be defined on the Boolean ring $F_2[\mathbf{x}]$ as follows: for $a,b\in F_2[\mathbf{x}]$,  $a\le b$ if and only if $ab=a$. Write $a<b$ if $a\le b$ but $a\ne b$. An element $a\in F_2[\mathbf{x}]$ is called an {\it atom} if there is no element $f\in F_2[\mathbf{x}]$ such that $0<f<a$. The following proposition is a standard result.
\begin{proposition}\label{atom}
\par
(i) Every element in $F_2[\mathbf{x}]$ can be expressed as a unique disjunction (up to rearrangement) of atoms and the disjunction of all atoms in $F_2[\mathbf{x}]$ is equal to $1$.
\par
(ii) The atoms of $F_2[\mathbf{x}]$ are in one-to-one correspondence with the elements of $F_2^n$. The correspondence is defined by
\be
\mathbf{c}\longrightarrow a_{\mathbf{c}}(\mathbf{x}), \quad \mathbf{c}\in F_2^n,
\ee
where $a_{\mathbf{c}}(\mathbf{x})$ is defined by $a_{\mathbf{c}}(\mathbf{x})=1$ if $\mathbf{x}=\mathbf{c}$; and $0$ otherwise.
\end{proposition}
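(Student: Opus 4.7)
The plan is to prove part (ii) first and then deduce part (i) from it, exploiting the standard identification of $F_2[\mathbf{x}]$ with the ring $F_2^{F_2^n}$ of $F_2$-valued functions on $F_2^n$. Under evaluation $f \mapsto (\mathbf{c} \mapsto f(\mathbf{c}))$, this is an isomorphism of rings, and the partial order $a \le b$ (defined by $ab = a$) translates to the pointwise condition $\{\mathbf{c} : a(\mathbf{c}) = 1\} \subseteq \{\mathbf{c} : b(\mathbf{c}) = 1\}$.

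For (ii), I would first exhibit, for each $\mathbf{c} = (c_1, \ldots, c_n) \in F_2^n$, the explicit polynomial
$$a_{\mathbf{c}}(\mathbf{x}) = \prod_{i : c_i = 1} x_i \; \prod_{i : c_i = 0} (1 + x_i),$$
and verify by direct substitution that it evaluates to $1$ at $\mathbf{c}$ and to $0$ at every other point, matching the description in the statement. In the function-ring picture, $a_{\mathbf{c}}$ is simply the characteristic function of the singleton $\{\mathbf{c}\}$. Because a singleton contains no nonempty proper subset, there cannot exist $f$ with $0 < f < a_{\mathbf{c}}$, so $a_{\mathbf{c}}$ is an atom. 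Conversely, an element whose support has two or more points strictly contains the characteristic function of any single point in that support, so it is not an atom; hence every atom has singleton support and equals some $a_{\mathbf{c}}$. This establishes the bijection $\mathbf{c} \longleftrightarrow a_{\mathbf{c}}$.

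For (i), given $f \in F_2[\mathbf{x}]$, I would set $S = \{\mathbf{c} \in F_2^n : f(\mathbf{c}) = 1\}$ and use the key observation that distinct atoms are orthogonal: $a_{\mathbf{c}} a_{\mathbf{c}'} = 0$ for $\mathbf{c} \ne \mathbf{c}'$, since their supports are disjoint singletons. Because $a \vee b = a + b + ab$, this orthogonality reduces the disjunction to a sum, so $\bigvee_{\mathbf{c} \in S} a_{\mathbf{c}} = \sum_{\mathbf{c} \in S} a_{\mathbf{c}}$. This sum takes value $1$ precisely on $S$ and hence agrees with $f$ as a function, so equals $f$ in $F_2[\mathbf{x}]$. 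Specializing to $f = 1$ yields $\bigvee_{\mathbf{c} \in F_2^n} a_{\mathbf{c}} = 1$. Uniqueness is immediate: any alternative representation $f = \bigvee_{\mathbf{c} \in T} a_{\mathbf{c}}$ must evaluate to $1$ exactly on $T$, forcing $T = S$.

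The main technical point, and the only substantive obstacle, is justifying the evaluation isomorphism $F_2[\mathbf{x}] \cong F_2^{F_2^n}$, i.e.\ that no nonzero element of the quotient ring vanishes identically on $F_2^n$. I would handle this by a dimension count: both sides have $F_2$-dimension $2^n$ (the squarefree monomials $\mathbf{x}^{\mathbf{a}}$ form a basis on the left, and $F_2^{F_2^n}$ has $2^n$ coordinate functions), and surjectivity of evaluation is immediate once the $a_{\mathbf{c}}$'s have been constructed, so the map is also injective. With that in hand, every step above is a routine translation between polynomials and their induced functions.
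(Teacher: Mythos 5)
Your proof is correct. Note, however, that the paper offers no proof of this proposition at all: it is simply declared ``a standard result,'' with the surrounding discussion pointing to the standard references on Boolean rings (Rudeanu, Stone). So there is nothing in the paper to compare against step by step; what you have done is supply the standard justification the paper omits. Your route --- establishing the evaluation isomorphism $F_2[\mathbf{x}]\cong F_2^{F_2^n}$ by a dimension count plus surjectivity, identifying atoms with characteristic functions of singletons, and using orthogonality of distinct atoms to collapse disjunctions to sums --- is exactly the classical argument, and you correctly isolate the one substantive point (that no nonzero element of the quotient ring vanishes identically on $F_2^n$). Two small remarks: the orthogonality relation $a_{\mathbf{a}}a_{\mathbf{b}}=0$ for $\mathbf{a}\ne\mathbf{b}$ and the resulting identity $\bigvee a_{\mathbf{c}}=\sum a_{\mathbf{c}}$ are precisely what the paper records immediately after the proposition (equation (\ref{atom1}) and Proposition \ref{atom2}), so your part (i) also subsumes that follow-up material; and for completeness you might note that $f=0$ corresponds to the empty disjunction, an edge case the paper handles separately when it defines the disjunctive normal form of $0$ to be $x_1x_1'$.
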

\par
\medskip
The atom corresponds to an element $\mathbf{c}=(c_1,c_2,\ldots,c_n)\in F_2^n$ can be given by the formula
\be
a_{\mathbf{c}}=(x_1+c_1+1)(x_2+c_2+1)\cdots(x_n+c_n+1)=\prod_{i=1}^n(x_i+c_i+1).
\ee
Note that if $c_i=1$, then $x_i+c_i+1=x_i$; and if $c_i=0$, then $x_i+c_i+1=x_i'$. 
\par
For example, the atom corresponds to $\mathbf{c}=(1,0,1,0,1)\in F_2^5$ is the function 
\be
a_{\mathbf{c}}=x_1x_2'x_3x_4'x_5.
\ee
\par
We remark that if $\mathbf{a}\ne\mathbf{b}$ are two elements of $F_2^n$, then the product of the corresponding atoms is $0$:
\bea\label{atom1}
a_{\mathbf{a}}a_{\mathbf{b}}=0,\quad\mbox{if $\mathbf{a}\ne\mathbf{b}$}.
\eea
This property implies that 
\bea
a_{\mathbf{a}}\vee a_{\mathbf{b}}=a_{\mathbf{a}}+ a_{\mathbf{b}},\quad\mbox{if $\mathbf{a}\ne\mathbf{b}$}.
\eea
Thus, a disjunction of different atoms is equal to the sum of the same set of atoms, and we can rewrite the first statement of Proposition \ref{atom} as
\begin{proposition}\label{atom2} 
Every element in $F_2[\mathbf{x}]$ can be expressed as a unique sum (up to rearrangement) of atoms and the sum of all atoms in $F_2[\mathbf{x}]$ is equal to $1$. 
\end{proposition}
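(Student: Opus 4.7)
The plan is to bootstrap Proposition \ref{atom2} directly from Proposition \ref{atom}(i) together with the orthogonality relation (\ref{atom1}) and the identity $a_{\mathbf{a}}\vee a_{\mathbf{b}}=a_{\mathbf{a}}+a_{\mathbf{b}}$ for distinct atoms that is already displayed just before the statement. The substantive content (every $f$ is a disjunction of atoms) has already been supplied; what remains is only to rewrite $\vee$ as $+$ and to verify uniqueness of the resulting representation.

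First, given $f\in F_2[\mathbf{x}]$, I would invoke Proposition \ref{atom}(i) to write $f$ as a disjunction of pairwise distinct atoms, say $f = a_{\mathbf{c}_1}\vee a_{\mathbf{c}_2}\vee\cdots\vee a_{\mathbf{c}_k}$. Then I would induct on $k$: the base case $k=1$ is trivial, and for the inductive step I would apply $\vee$ associativity together with the identity
\be
(a_{\mathbf{c}_1}+\cdots+a_{\mathbf{c}_{k-1}})\vee a_{\mathbf{c}_k}=(a_{\mathbf{c}_1}+\cdots+a_{\mathbf{c}_{k-1}})+a_{\mathbf{c}_k}+(a_{\mathbf{c}_1}+\cdots+a_{\mathbf{c}_{k-1}})a_{\mathbf{c}_k},
\ee
observing that the last cross term vanishes termwise by (\ref{atom1}) since $\mathbf{c}_k\ne \mathbf{c}_i$ for $i<k$. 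This converts the disjunctive normal form into the sum form $f=\sum_{i=1}^k a_{\mathbf{c}_i}$. The same argument applied to the identity $1=\vee_{\mathbf{c}\in F_2^n} a_{\mathbf{c}}$ from Proposition \ref{atom}(i) gives $1=\sum_{\mathbf{c}\in F_2^n} a_{\mathbf{c}}$.

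For uniqueness, suppose $\sum_{\mathbf{c}\in F_2^n} \alpha_{\mathbf{c}} a_{\mathbf{c}} = \sum_{\mathbf{c}\in F_2^n} \beta_{\mathbf{c}} a_{\mathbf{c}}$ with $\alpha_{\mathbf{c}},\beta_{\mathbf{c}}\in F_2$. Multiplying both sides by a fixed atom $a_{\mathbf{b}}$ and using $a_{\mathbf{b}}^2=a_{\mathbf{b}}$ together with (\ref{atom1}) collapses each sum to $\alpha_{\mathbf{b}}a_{\mathbf{b}}$ and $\beta_{\mathbf{b}}a_{\mathbf{b}}$, respectively. Since $a_{\mathbf{b}}\ne 0$ (it evaluates to $1$ at $\mathbf{x}=\mathbf{b}$), this forces $\alpha_{\mathbf{b}}=\beta_{\mathbf{b}}$ for every $\mathbf{b}\in F_2^n$. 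Equivalently, one can evaluate both sides at $\mathbf{x}=\mathbf{b}$ and use Proposition \ref{atom}(ii) to read off $\alpha_{\mathbf{b}}=f(\mathbf{b})=\beta_{\mathbf{b}}$.

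There is no real obstacle here; the proposition is essentially a notational reformulation of Proposition \ref{atom}(i). The only point that requires a word of care is the inductive conversion of $\vee$ to $+$, where one must be sure that the vanishing of the cross term $(a_{\mathbf{c}_1}+\cdots+a_{\mathbf{c}_{k-1}})a_{\mathbf{c}_k}$ is justified atom by atom using (\ref{atom1}) rather than by any global identity.
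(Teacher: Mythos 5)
Your proposal is correct and follows essentially the same route as the paper, which derives Proposition \ref{atom2} from Proposition \ref{atom}(i) by observing that the orthogonality relation (\ref{atom1}) turns a disjunction of distinct atoms into their sum. You simply make explicit the induction and the uniqueness check that the paper leaves implicit.
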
  
\par
\medskip
There is a one-to-one correspondence between the set $F_2^n$ and the set 
\be
\{0,1,\ldots, 2^n-1\}
\ee
of integers given by sending each integer $i\in\{0,1,\ldots, 2^n-1\}$ to its binary representation $\mathbf{i}=(i_1,\ldots,i_n)$, and we will use these notation interchangeably. 
\par\medskip
If $f\in F_2[\mathbf{x}]$, we will write $f(i)$ for $f(i_1,\ldots, i_n)$. Let
\bea
a_{\mathbf{i}}=\prod_{j=1}^n(x_j+i_j+1),\;\; i\in\{0,1,\ldots, 2^n-1\},
\eea
and order the elements of $F_2^n$ according to the natural order of the set $\{0,1,\ldots, 2^n-1\}$. Then by Proposition \ref{atom2} we have the following:
\begin{proposition}
Every $0\ne f\in F_2[\mathbf{x}]$ can be expressed as
\bea\label{dnf}
f=f(0)a_{\mathbf{0}}+f(1)a_{\mathbf{1}}+\cdots+f(2^n-1)a_{\mathbf{2^n-1}}.
\eea
\end{proposition}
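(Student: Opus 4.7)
The plan is to deduce this formula directly from Proposition \ref{atom2} by identifying \emph{which} atoms appear in the unique atom expansion of $f$, and the identification will be done by evaluating $f$ at each point of $F_2^n$.

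First I would invoke Proposition \ref{atom2} to write
\[
f \;=\; \sum_{\mathbf{i}\in S} a_{\mathbf{i}}
\]
for a uniquely determined subset $S\subseteq\{0,1,\ldots,2^n-1\}$. The task is then to show that $S$ is precisely the set $\{\mathbf{j} : f(\mathbf{j})=1\}$, because then the zero-contributions at $\mathbf{j}\notin S$ can be reinserted harmlessly to give the sum over all of $\{0,\ldots,2^n-1\}$ weighted by $f(\mathbf{j})$.

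The key step is to evaluate both sides at an arbitrary $\mathbf{j}\in F_2^n$. Using the explicit product formula
\[
a_{\mathbf{i}}(\mathbf{j}) \;=\; \prod_{k=1}^n(j_k+i_k+1),
\]
each factor equals $1$ precisely when $j_k=i_k$, so $a_{\mathbf{i}}(\mathbf{j})=1$ if $\mathbf{i}=\mathbf{j}$ and $a_{\mathbf{i}}(\mathbf{j})=0$ otherwise (this is also the content of Proposition \ref{atom}(ii) and the orthogonality \eqref{atom1}). Therefore
\[
f(\mathbf{j}) \;=\; \sum_{\mathbf{i}\in S} a_{\mathbf{i}}(\mathbf{j}) \;=\; \begin{cases}1, & \mathbf{j}\in S,\\ 0, & \mathbf{j}\notin S,\end{cases}
\]
which identifies $S=\{\mathbf{j}:f(\mathbf{j})=1\}$. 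Substituting back and inserting the vanishing terms for $\mathbf{j}\notin S$ gives exactly \eqref{dnf}.

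There is no real obstacle; the only delicate point is to keep straight that the sum is in $F_2$, so the coefficient $f(\mathbf{j})\in F_2$ acts as an on/off switch rather than a genuine scalar weight, and that orthogonality \eqref{atom1} guarantees that the disjunction of the selected atoms coincides with their sum. Both points are already established above the proposition, so the argument is essentially an evaluation-at-a-point verification.
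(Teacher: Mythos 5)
Your argument is correct and matches the paper's route: the paper derives the proposition directly from Proposition \ref{atom2} (offering no further detail), and your evaluation-at-a-point identification of which atoms occur is exactly the step being left implicit there. Nothing is missing.
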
 
Now we are ready to give the following definition.
\begin{definition}
The expression in (\ref{dnf}) is called the disjunctive normal form (d.n.f) for $f\ne 0$. For $f=0$, the disjunctive normal form is $x_1x_1'$.
\end{definition}
\par
\medskip
It is straight forward to convert a function $f\in F_2[\mathbf{x}]$ from its polynomial expression to its disjunctive normal form: for each monomial in the expression of $f$, one fills in the missing variable $x_i$ with $x_i+x_i'$ and then simplify.
\begin{example} Consider $f=x_1x_2x_5+x_1x_2x_3x_4\in F_2^5[\mathbf{x}]$. Then we have
\be
f &=& x_1x_2(x_3+x_3')(x_4+x_4')x_5 + x_1x_2x_3x_4(x_5+x_5')\\
{}&=& x_1x_2x_3x_4'x_5 + x_1x_2x_3'x_4x_5 + x_1x_2x_3'x_4'x_5 + x_1x_2x_3x_4x_5'\\
{}&=& a_{\mathbf{25}} + a_{\mathbf{27}} + a_{\mathbf{29}} + a_{\mathbf{30}}.
\ee
\end{example}
\par
\medskip
\section{Dynamics of Boolean networks}
\par
\medskip
In this section, we will explain how to use the disjunctive normal form to derive information on the fixed points and limit cycles of a Boolean network. Fix a Boolean function
\be
f=(f_1, f_2,\ldots, f_n): F_2^n\longrightarrow F_2^n,
\ee
and define the support $\mbox{supp}(f)$ of $f$ to be the set of atoms that show up in the disjunctive normal forms of the $f_i$'s, i.e. 
\bea
\qquad\mbox{supp}(f)=\{a_{\mathbf{i}}\;|\; \exists f_j\;\mbox{such that}\; a_{\mathbf{i}} \;\mbox{appear in the d.n.f of $f_j$}\}.
\eea
We order the elements of supp$(f)$ according to the order of their indexes, and assume that 
\bea
\qquad\mbox{supp}(f)=\{a_{\mathbf{i}_1},a_{\mathbf{i}_2},\ldots, a_{\mathbf{i}_s}\}.
\eea
\par
\medskip
Then we can express each of the coordinate functions of $f$ as a linear combination of the elements in supp$(f)$:
\bea\label{lc}
f_i = \sum_{j=1}^sb_{ji}a_{\mathbf{i}_j},\;\;1\le i\le n,
\eea
where $b_{ji}\in F_2$. Using matrix notation, we can rewrite (\ref{lc}) as
\bea
f=(a_{\mathbf{i}_1},a_{\mathbf{i}_2},\ldots, a_{\mathbf{i}_s})B,
\eea
where $B$ is an $s\times n$ matrix whose entries are the $b_{ji}$'s defined by (\ref{lc}). More precisely, the $i$th column of $B$ is $(b_{1i},b_{2i},\ldots,b_{si})^T,\;1\le i\le n$.
\par
\medskip
Note that under our assumption, the corresponding integers $\{i_1,i_2,\ldots,i_s\}$ of the indexes (recall that the boldface notation $\mathbf{i}_j$ is the binary representation of $i_j$) of the elements in supp$(f)$ satisfy the relation
\be
0\le i_1< i_2<\cdots< i_s\le 2^n-1.
\ee
\par
Each row of the matrix $B$ can be viewed as an element of $F_2^n$, and we denote the corresponding integer of the $j$th row of $B$ by $r_j,\;1\le j\le s$. Note that these integers $r_j$ are not necessary distinct. We now define the following sets of integers:
\begin{gather}\label{set}
S = \{i_1,i_2,\ldots,i_s\}\cap\{r_1,r_2,\ldots,r_s\},\\
S_0 = \{i_j\;|\;i_j=r_j\}. \nonumber
\end{gather}
\par
\medskip
The following theorem describes the fixed points of $f$.
\begin{theorem}\label{T1} Notation as before. The fixed point(s) of $f$ can be described as follows.
\par
(i) If $0\notin\{i_1,i_2,\ldots,i_s\}$, then $0$ is a fixed point, and in this case, the set of fixed point(s) of $f$ is $S_0\cup\{0\}$.
\par
(ii) If $0\in\{i_1,i_2,\ldots,i_s\}$, then the set of fixed point(s) of $f$ is $S_0$.
\par
(iii) If $S=S_0$, then either $f$ is a fixed point system, or it has only one cycle of length $>1$, which is necessary length $2$ with one of the vertices being $0$.
\end{theorem}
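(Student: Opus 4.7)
The plan is to reduce every claim to a single evaluation identity for $f$. Since the atoms satisfy (\ref{atom1}) and $a_{\mathbf{i}_j}(\mathbf{c}) = 1$ iff $\mathbf{c} = \mathbf{i}_j$, the matrix expression $f = (a_{\mathbf{i}_1},\ldots,a_{\mathbf{i}_s})B$ yields the explicit rule
\[
  f(\mathbf{c}) = \mathbf{r}_j \text{ if } c = i_j, \qquad f(\mathbf{c}) = \mathbf{0} \text{ if } c \notin \{i_1,\ldots,i_s\}.
\]
Thus $f$ collapses every non-support state to $\mathbf{0}$ in one step, and acts on support states by $\mathbf{i}_j \mapsto \mathbf{r}_j$. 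Parts (i) and (ii) then fall out by splitting the equation $f(\mathbf{c})=\mathbf{c}$ according to whether $c$ is in the support. In the support case $\mathbf{c}=\mathbf{i}_j$, the equation becomes $i_j = r_j$, which is exactly $c \in S_0$. In the non-support case, $f(\mathbf{c})=\mathbf{0}$ forces $\mathbf{c}=\mathbf{0}$, and $\mathbf{0}$ genuinely lives in the non-support case precisely when $0 \notin \{i_1,\ldots,i_s\}$; otherwise $\mathbf{0}$ is already accounted for in the support analysis.

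For (iii) the workhorse is a \emph{fixed-point lemma} under the hypothesis $S=S_0$: if an index $i_j$ happens to equal some $r_k$, then $i_j \in S = S_0$, which together with distinctness of the $i$'s forces $r_j=i_j$, so $\mathbf{i}_j$ is a fixed point. I will apply this to any hypothetical cycle of length greater than $1$. Let $\mathbf{v}_1 \to \cdots \to \mathbf{v}_m \to \mathbf{v}_1$ be such a cycle. Every $\mathbf{v}_\ell$ lies in the range of $f$, which by the evaluation identity is contained in $\{\mathbf{0}\} \cup \{\mathbf{r}_1,\ldots,\mathbf{r}_s\}$. If some cycle vertex $\mathbf{v}_\ell=\mathbf{r}_k$ were also in the support, say $\mathbf{v}_\ell=\mathbf{i}_j$, then $i_j=r_k$ triggers the lemma and forces $\mathbf{v}_\ell$ to be fixed, contradicting $m>1$.

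Hence every cycle vertex is either $\mathbf{0}$ or non-support. Since $m>1$ and $\mathbf{0}$ can occur only once on the cycle, there is at least one non-support vertex, whose image is $\mathbf{0}$; so $\mathbf{0}$ lies on the cycle. For $\mathbf{0}$ to leave itself in one step we need $0 \in \{i_1,\ldots,i_s\}$, say $0 = i_{j_0}$, in which case $f(\mathbf{0})=\mathbf{r}_{j_0}$ with $r_{j_0}\ne 0$. Applying the lemma again to $\mathbf{r}_{j_0}$ forces it to be non-support, so $f(\mathbf{r}_{j_0})=\mathbf{0}$ and the cycle is exactly $\mathbf{0}\to\mathbf{r}_{j_0}\to\mathbf{0}$, of length $2$. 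Determinism of $f$ means $\mathbf{0}$ can belong to at most one cycle, giving at most one such $2$-cycle; if none exists, every attractor is a fixed point and $f$ is a fixed-point system.

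The main obstacle is organizing the cycle analysis in (iii) so that the single lemma ``$i_j$ among the $r$'s forces $\mathbf{i}_j$ fixed'' does all the heavy lifting. Once that lemma is isolated, every alternative configuration for a non-trivial cycle is eliminated in one stroke, and the rigid length-$2$ structure with $\mathbf{0}$ as a vertex appears automatically from the one-step collapse rule combined with the restricted exit from $\mathbf{0}$.
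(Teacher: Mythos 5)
Your proposal is correct and follows essentially the same route as the paper: the evaluation rule $f(\mathbf{c})=\mathbf{0}$ off the support and $f(\mathbf{i}_j)=\mathbf{r}_j$ on it gives (i) and (ii) by the same case split, and your part (iii) rests on the identical key observation as the paper's (a cycle vertex that is both in the support and in the image lies in $S=S_0$ and is therefore fixed), merely packaged as an explicit lemma rather than the paper's two-case analysis of the cycle. No gaps.
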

\begin{proof} Using (\ref{lc}) we see that if $\mathbf{c}\in F_2^n$, then
\be
f_i(\mathbf{c})= \left\{\begin{array}{ccl}
                          0 &\mbox{if} & \mathbf{c}\notin\{\mathbf{i}_1,\mathbf{i}_2,\ldots,\mathbf{i}_s\};\\
                          b_{ji} &\mbox{if} & \mathbf{c}=\mathbf{i}_j\;\mbox{for some $1\le j\le s$}.
                         \end{array}\right.
\ee
Thus
\bea\label{fp}
\qquad f(\mathbf{c})= \left\{\begin{array}{lcl}
                          (0,0,\ldots,0) &\mbox{if} & \mathbf{c}\notin\{\mathbf{i}_1,\mathbf{i}_2,\ldots,\mathbf{i}_s\};\\
                          (b_{j1},b_{j2},\ldots,b_{jn}) &\mbox{if} & \mathbf{c}=\mathbf{i}_j\;\mbox{for some $1\le j\le s$}.
                         \end{array}\right.
\eea
If $0\notin\{i_1,i_2,\ldots,i_s\}$, then the first case in (\ref{fp}) implies that $\mathbf{c}\notin\{\mathbf{i}_1,\mathbf{i}_2,\ldots,\mathbf{i}_s\}$, is a fixed point if and only if $\mathbf{c}=0$. The second case in (\ref{fp}) implies that $\mathbf{i}_j, 1\le j\le s$ is a fixed point if and only if $\mathbf{i}_j=(b_{j1},b_{j2},\ldots,b_{jn})$, i.e. $i_j=r_j$. Therefore (i) follows.
\par
For (ii), we just need to notice that if $0\in\{i_1,i_2,\ldots,i_s\}$, then by the definition of supp$(f)$, $f(0)\ne 0$, and hence $0$ cannot be a fixed point.
\par
Now we prove (iii). Suppose that $S=S_0$ but $f$ is not a fixed point system. Let
\be
c_1\longrightarrow c_2\longrightarrow\cdots\longrightarrow c_k=c_0\longrightarrow c_1
\ee
be a cycle of length $k>1$. Then there are two possible cases: all $c_j\in\{i_1,i_2,\ldots,i_s\}$; or $c_j=0$ for some $1\le j\le k$ and $c_{j-1}\notin \{i_1,i_2,\ldots,i_s\}$. In the first case, all $c_j$ are also elements of $\{r_1,r_2,\ldots,r_s\}$ and hence belong to $S$. However, they are not fixed points and hence are not elements of $S_0$, which contradicts the assumption $S=S_0$. In the second case, if $k>2$, then $c_{j+1}\in S-S_0$, which is also a contradiction. So the only possibility is $k=2$, which can indeed happen. For instance, the function $f=x'\; :\; F_2\longrightarrow F_2$ has $\mbox{supp}(f)=\{0\}$, and since $f(0)=1$, $S=S_0=\emptyset$. But it has the cycle $\{0,1\}$.
\end{proof}
\par
\medskip
The statements about the fixed points of $f$ in Theorem \ref{T1} are explicit and can be readily turned into an algorithm. This will be discussed in the next section. We now consider the situation when $S\ne S_0$. In this case, $f$ may or may not be a fixed point system. We call a subset 
\be
V\subseteq \{i_1,i_2,\ldots,i_s\}\cup\{r_1,r_2,\ldots,r_s\}
\ee
an $f$-invariant set if $f(V)=V$. It is clear that $S_0$ is $f$-invariant. We denote by $M$ the maximum $f$-invariant subset. Then $S_0\subseteq M$. If $S_0\subsetneq M$, then $f|_{M-S_0}:=\sigma$ defines a permutation of $M-S_0$. Let
\bea\label{cy}
\sigma=\sigma_1\sigma_2\cdots\sigma_{h}
\eea
be the disjoint cycle decomposition of $\sigma$. Then we have
\begin{theorem} The Boolean network $f$ is a fixed point system if and only if $M=S_0$. If $M\ne S_0$, the limit cycles of $f$ of length $>1$ are given by the $\sigma_i,\;1\le i\le h$, defined by (\ref{cy}).
\end{theorem}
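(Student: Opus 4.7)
The plan is to reduce the theorem to two observations about the structural formula (\ref{fp}) for $f$. The first is that every row $r_j$ of $B$ is nonzero: by definition of $\mathrm{supp}(f)$, the atom $a_{\mathbf{i}_j}$ appears in some $f_l$, forcing $b_{jl}=1$ and hence $r_j\ne 0$. Combined with (\ref{fp}) this yields the key structural identity
\[
f(\mathbf{c})=0\ \Longleftrightarrow\ \mathbf{c}\notin\{\mathbf{i}_1,\ldots,\mathbf{i}_s\},
\]
and in particular $\mathrm{Im}(f)\subseteq\{0\}\cup\{r_1,\ldots,r_s\}$ with $0\notin\{r_1,\ldots,r_s\}$.

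Next I would prove the ``$\Leftarrow$'' direction of the iff by contradiction. Suppose $M=S_0$ and let $C=\{c_0,\ldots,c_{k-1}\}$ be a limit cycle of length $k>1$. Writing $c_j=f(c_{j-1})$, I would argue case-by-case: if $c_{j-1}\in\{i_1,\ldots,i_s\}$ then $c_j\in\{r_1,\ldots,r_s\}$, and otherwise $c_j=0$. In the latter case, $c_j=0$ lies on a cycle of length $k>1$ and is therefore not a fixed point, so the structural identity forces $0\in\{i_1,\ldots,i_s\}$. Hence $C\subseteq\{i_1,\ldots,i_s\}\cup\{r_1,\ldots,r_s\}$, and since $C$ is $f$-invariant we have $C\subseteq M=S_0$. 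But $S_0$ consists entirely of fixed points, contradicting $k>1$.

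For the ``$\Rightarrow$'' direction together with the statement about limit cycles, assume $M\ne S_0$. The preamble already gives that $\sigma=f|_{M-S_0}$ is a permutation of the nonempty set $M-S_0$. By Theorem~\ref{T1} the fixed points of $f$ are either $S_0$ or $S_0\cup\{0\}$, and in the latter case $0\notin M$ (since $0$ belongs to neither $\{i_1,\ldots,i_s\}$ nor $\{r_1,\ldots,r_s\}$), so every fixed point of $f$ lying in $M$ is already in $S_0$. Hence $\sigma$ is fixed-point free, each $\sigma_i$ has length $\ge 2$, and each $\sigma_i$ constitutes a limit cycle of $f$ of length $>1$; in particular $f$ is not a fixed point system. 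Conversely, the containment argument of the previous paragraph shows every limit cycle of $f$ of length $>1$ lies in $M-S_0$ and therefore coincides with one of the $\sigma_i$.

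The delicate step is verifying the containment of a length-$>1$ cycle inside $\{i_1,\ldots,i_s\}\cup\{r_1,\ldots,r_s\}$ when $0$ happens to sit on the cycle; this is precisely the situation flagged in Theorem~\ref{T1}(iii), and the nonvanishing of the rows $r_j$ combined with the characterization of $f^{-1}(0)$ is the ingredient that handles it.
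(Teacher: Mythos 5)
Your proof is correct and takes essentially the same route the paper intends: the paper's own proof is just ``this is straightforward'' plus the remark that one must work with the union $\{i_1,\ldots,i_s\}\cup\{r_1,\ldots,r_s\}$ because $0$ can be reached from a state outside $\{i_1,\ldots,i_s\}$, and your argument --- nonvanishing of the rows $r_j$, the resulting characterization of $f^{-1}(0)$, and the containment of any length-$>1$ cycle in $M$ --- is exactly the careful version of that remark. I see no gaps.
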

\begin{proof} This is straightforward. Note that we need to consider the set 
\be
\{i_1,i_2,\ldots,i_s\}\cup\{r_1,r_2,\ldots,r_s\},
\ee
since if $0\in \{i_1,i_2,\ldots,i_s\}$, then it can happen that there is a $c\notin \{i_1,i_2,\ldots,i_s\}$ such that $f(c)=0$.
\end{proof}
\par
\medskip
\section{Algorithm and examples} 
\par
\medskip
We now give an algorithm on the fixed points of a Boolean network from a given polynomial representation based on Theorem \ref{T1}. To better organize the computation, instead using the $x_i'$'s, we introduce new variables $y_1, y_2,\ldots, y_n$. We call the algorithm  BNFP (Boolean Network Fixed Point) algorithm. Note that in the previous section, the $i_j$'s correspond to the columns and the $r_j$'s correspond to the rows. However, in the computation, there is no need to distinguish between row vectors and column vectors. If we use row vector for the $i_j$'s, then the columns give the $r_j$'s. Also, in computer implementation, one may want to just use the binary numbers instead of their integer values.
\begin{algorithm} (BNFP)
\par
\medskip
{\bf INPUT:} $f=(f_1,f_2,\ldots,f_n),\; f_i\in F_2[x_1,x_2,\ldots,x_n],\; 1\le i\le n$.
\par
{\bf OUTPUT:} Fixed point(s) of $f$.
\par
\medskip
1. For each $1\le i\le n$ and each monomial term in $f_i$, if the $j$-th variable $x_j$ is missing, insert $x_j+y_j$ into the $j$-th place, simplify mod $2$.
\par
2. Assign an integer value to each of the terms $z_1z_2\cdots z_n,\; z_j\in\{x_j,y_j\}$, obtained in step 1 by first replacing $x_j$ by $1$ and replacing $y_j$ by $0$ to obtain the corresponding binary value. Assume that all the integer values thus obtained are
\be
i_1<i_2<\cdots<i_s.
\ee 
\par
3. For each $1\le i\le n$ and each expression $f_i$ obtained in step 1, assign a vector $c_i=(c_{i1},c_{i2},\ldots,c_{is})\in F_2^s$ by putting $1$ or $0$ as the $j$-th coordinate depending on whether $i_j$ shows up in the expression of $f_i$ or not.
\par
4. For each $1\le j\le s$, let $r_j$ be the integer value of the binary number $c_{1j}c_{2j}\cdots c_{nj}$.  
\par
5. Form $S_0=\{i_j\;|\; i_j=r_j\}$.
\par
6. If $0\in \{i_1,i_2,\ldots,i_s\}$, output $S_0$; otherwise, output $S_0\cup\{0\}$. 
\end{algorithm}
\par
\medskip
Let us use an example to explain the algorithm.
\par
\medskip
\begin{example} Consider the system $f: F_2^3\longrightarrow F_2^3$ defined by 
\be
f_1=x_1x_2+x_1x_2x_3,\;\; f_2=x_1x_3,\;\;f_3=x_1x_2x_3.
\ee
We walk through each of the steps in the algorithm:
\par
1. Insert terms and simplify:
\begin{gather*}
f_1= x_1x_2(x_3+y_3)+x_1x_2x_3 = x_1x_2y_3,\\
f_2=x_1(x_2+y_2)x_3 = x_1x_2x_3+x_1y_2x_3,\\
f_3=x_1x_2x_3.
\end{gather*} 
\par
2. Compute the integers $i_j$: $f_1$ gives the binary number $110$, which is $6$; $f_2$ gives $111$ and $101$, which are $7$ and $5$; $f_3$ gives $7$. Thus we have $3$ distinct integers ($s=3$): $5,6,7$.
\par
3. Since only $6$ appears in $f_1$, we have $c_1=(0,1,0)$. Similarly, $c_2=(1,0,1)$ and $c_3=(0,0,1)$.
\par
4. Reading the columns of the matrix with rows $c_1,c_2$, and $c_3$, we have $r_1=010=2$, $r_2=100=4$, and $r_3=011=3$.
\par
5. Thus $S_0=\emptyset$.
\par
6. Since $0\notin \{5,6,7\}$, the fixed point set is $\{0\}$.
\end{example}
\par
\medskip
We remark that for a monomial, there is no need to use the inserting computation to find out what integers we can get from it; and for a polynomial, we can find the integers each monomial term gives first, then gather them by noticing that two equal integers cancel each other. Here is an example.
\begin{example} Let $f= x_1x_2x_3x_5x_7+x_2x_3x_4x_6\in F_2[x_1,\ldots,x_7]$. The first monomial corresponds to $1110101\rightarrow 117$. It gives four integers, the other three can be obtained by adding $2$, $8$, and $10$, respectively, to $117$. They are $119$, $125$, and $127$. The numbers given by the second monomial are:
\be
58,\; 59,\; 62,\; 63,\; 122,\; 123,\; 126,\; 127.
\ee
Thus the support of $f$ is given by the integers: 
\be
58,\; 59,\; 62,\; 63,\; 117,\; 119,\; 122,\; 123,\; 125,\; 126.
\ee
\end{example}
\par
\medskip
We give an example to show how to use the described method to construct Boolean networks with given fixed points and cycle structure. 
\begin{example} We construct a network $f: F_2^6\longrightarrow F_2^6$ with the following fixed points and cycle structure: two fixed points $001011, 111100$; two cycles 
\begin{gather*}
010001\rightarrow 010011\rightarrow 011101 \rightarrow 011111\rightarrow 010001,\\
100100\rightarrow 101100 \rightarrow 100100. 
\end{gather*}
Since $0$ is not a fixed point, $0$ must appear in the support of $f$, we can take care of this by sending $0$ to, say, $111100$. The following functions, which are given in their disjunctive normal forms, give the coordinate functions of a Boolean network with the desired property:
\be
f_1&=& a_{\mathbf{0}}+a_{\mathbf{36}}+a_{\mathbf{44}}+a_{\mathbf{60}},\\
f_2 &=& a_{\mathbf{0}}+a_{\mathbf{17}}+a_{\mathbf{19}}+a_{\mathbf{29}}+a_{\mathbf{31}}+a_{\mathbf{60}},\\
f_3 &=& a_{\mathbf{0}}+a_{\mathbf{11}}+a_{\mathbf{19}}+a_{\mathbf{29}}+a_{\mathbf{36}}+a_{\mathbf{60}},\\
f_4 &=& a_{\mathbf{0}}+a_{\mathbf{19}}+a_{\mathbf{29}}+a_{\mathbf{36}}++a_{\mathbf{44}}+a_{\mathbf{60}},\\
f_5 &=& a_{\mathbf{11}}+a_{\mathbf{17}}+a_{\mathbf{29}},\\
f_6 &=& a_{\mathbf{11}}+a_{\mathbf{17}}+a_{\mathbf{19}}+a_{\mathbf{29}}+a_{\mathbf{31}}.\\
\ee
These functions are obtained by first converting the binary numbers that define the fixed points and the cycles to integers and then use the algorithm. The computation can be done by hand easily.
\end{example}
\par
\medskip
Note that though the conversion of these coordinate functions to polynomials in $x_1,\ldots,x_6$ is straightforward, the expressions are not as neat. 
\par
\medskip
\section{An example}
In this section, we use a Boolean network formulated in \cite{Albe} to model the expression pattern of the segment polarity genes in the fruit fly {\it Drosophila melanogaster} as an example to illustrate how we can apply our method to modify Boolean networks according to experimental evidence. Patterning in the early {\it Drosophila melanogaster} embryo is controlled by a protein regulatory network. There are $7$ segment polarity genes considered in \cite{Albe} including {\it wingless (wg)}, and {\it patched (ptc)}, with the corresponding proteins Wingless (WG) and Patched (PTC) respectively. The full model contains $21$ parameters and was later used as an example for the computational algebra approach to the reverse engineering of gene regulatory networks developed in \cite{Laub}. To analyze the stable states of the model, \cite{Albe} used a simplified Boolean network taking into consideration of the biological information. Here, for illustration purpose, we use also the simplified Boolean network. Note that we make no claim of the correctness of our modified model. According to \cite{Toml}, experimental evidence is yet to emerge to verify the existing models.
\par\medskip
The Boolean network we want to consider is given by the updating functions in Table 1.
\par\medskip
{\small
\begin{table}[h]
\begin{tabular}{c|l}
Node & Boolean updating function\\ \hline
$wg_1$ & $wg_1^{t+1} = wg_1\wedge\neg wg_2\wedge\neg wg_4$\\
$wg_2$ & $wg_2^{t+1} = \neg wg_1\wedge wg_2\wedge\neg wg_3$\\
$wg_3$ & $wg_3^{t+1} = wg_1\vee wg_3$\\
$wg_4$ & $wg_4^{t+1} = wg_2\vee wg_4$\\
$PTC_1$ & $PTC_1^{t+1} = (\neg wg_2\wedge\neg wg_4)\vee(PTC_1\neg wg_1\wedge\neg wg_3)$;\\ 
$PTC_2$ & $PTC_2^{t+1} = (\neg wg_1\wedge\neg wg_3)\vee(PTC_2\neg wg_2\wedge\neg wg_4)$;\\
$PTC_3$ & $PTC_3^{t+1} = 1$;\\
$PTC_4$ & $PTC_4^{t+1} = 1$.\\
\hline                               
\end{tabular}
\caption{Boolean model from \cite{Albe}}
\end{table}
}
For detailed information of this Boolean network, we refer the reader to \cite{Albe}. Since $PTC_3$ and $PTC_4$ remain unchanged in the process and they do not appear in the other updating rules, for computation purpose, we can ignore them. We introduce the variables as follows: $x_i$ for $wg_i$ for $1\le i\le 4$, $x_5$ for $PTC_1$, and $x_6$ for $PTC_2$. Then the polynomial function representation of the above Boolean network is:
{\footnotesize
\be
f_1 &=& x_1(x_2+1)(x_4+1),\\
f_2 &=& (x_1+1)x_2(x_3+1),\\
f_3 &=& x_1+x_3+x_1x_3,\\
f_4 &=& x_2+x_4+x_2x_4,\\
f_5 &=& (x_2+1)(x_4+1)+x_5(x_1+1)(x_3+1)+x_5(x_1+1)(x_2+1)(x_3+1)(x_4+1),\\
f_6 &=& (x_1+1)(x_3+1)+x_6(x_2+1)(x_4+1)+x_6(x_1+1)(x_2+1)(x_3+1)(x_4+1).
\ee
}
\par
It is easy to find the fixed points of this Boolean network, since there are only $64$ possible states\footnote{A useful tool to analyze small size Boolean networks is the Discrete Visualizer of Dynamics software developed by the Applied Discrete Mathematics Group at Virginia Bioinformatics Institute available at: www.vbi.vt.edu/admg/tools/.}. There are $10$ fixed points (the number in the parenthesis indicates the corresponding component size):  
\[\begin{array}{ccccl}
000101\;(2),\; &000111\;(2),\; &001010\;(2),\; &001011\;(2),\; &001100\;(36),\\
000011\;(4),\; &010101\;(4),\; &010111\;(4),\; &101010\;(4),\; &101011\;(4).
\end{array}
\]
Among these $10$ fixed points, only the two ($000101$ and $000011$) in the first column  and the one ($001100$) that lies in the component of size $36$ have been experimentally observed. Taking into consideration of the biological information again, the number can be reduced to $6$ \cite{Albe}. 
\par
Now suppose we want to modify this Boolean network so it contains only the $3$ known fixed points, and we want to redefine the images of the other $7$ fixed points as follows (this is just a random choice): 
\[\begin{array}{c}
010101\rightarrow 010111\rightarrow 011101,\\
101011\rightarrow 101010\rightarrow 001010\rightarrow 001011\rightarrow 000111\rightarrow 000101.
\end{array}
\]
Basically, the component to which $000011$ belongs is left alone, the first row redefines the function so that the components to which $010101$ and $010111$ belong will be added to the component with $36$ points, and the second row gathers the other $5$ components and add them to the component to which $000101$ belongs. Then we can apply our method to derive a new set of functions which define a Boolean network with the $3$ given fixed points:
{\footnotesize
\be
f_1 &=& x_1(x_4+1)(x_3x_5+x_3x_5x_6+1)(x_2+1),\\
f_2 &=& x_2(x_3+1)(x_1+1),\\
f_3 &=& x_3x_5x_6+x_1x_2x_3x_5x_6+x_1x_3x_4x_5x_6+x_1x_3+x_1+x_3+x_2x_3x_5x_6+\\
 {} &{}&   x_2x_4x_5x_6+x_3x_4x_5x_6+x_1x_3x_5x_6+x_1x_2x_4x_5x_6,\\
f_4 &=& x_1x_2x_4+x_3x_5x_6+x_1x_4+x_1x_2x_3x_5x_6+x_1x_3x_4x_5x_6+x_2x_4+\\
 {} &{}&   x_2+x_4+x_2x_3x_5x_6+x_3x_4x_5x_6+x_1x_2x_3x_4+x_1x_3x_5x_6+x_2x_3x_4x_5x_6+\\
 {} &{}&   x_1x_3x_4+x_1x_2x_3x_4x_5x_6,\\
f_5 &=& 1+x_1x_4x_5+x_1x_2x_5+x_4x_5x_6+x_2x_4x_5+x_2x_3x_5+x_2x_5+x_3x_4x_5+\\
 {} &{}&  x_2x_4x_6+x_1x_2x_3x_4x_5+x_1x_3x_4x_5x_6+x_2x_4+x_1x_2x_3x_4x_6+x_4x_5+x_2+\\
 {} &{}&  x_4+x_2x_3x_4x_5+x_2x_3x_4x_6+x_2x_4x_5x_6+x_3x_4x_5x_6+x_1x_4x_5x_6+\\
 {} &{}&  x_1x_2x_4x_5+x_1x_2x_3x_5+x_1x_3x_4x_5+x_1x_2x_4x_6+x_2x_3x_4x_5x_6+\\
 {} &{}&  x_1x_2x_4x_5x_6+x_1x_2x_3x_4x_5x_6,\\
f_6 &=& 1+x_2x_3x_6+x_3x_4x_6+x_1x_2x_6+x_3x_5x_6+x_2x_3x_5+x_1x_4x_6+x_3x_4x_5+\\
 {} &{}&    x_1x_3x_6+x_1x_3x_5+x_1x_2x_3x_4x_5+x_1x_6+x_3x_5+x_1x_2x_3x_4x_6+x_1x_3+\\
 {} &{}&     x_3x_6+x_1+x_3+x_2x_3x_4x_5+x_2x_3x_4x_6+x_2x_3x_5x_6+x_3x_4x_5x_6+\\
 {} &{}&     x_1x_2x_3x_6+x_1x_3x_4x_6+x_1x_2x_3x_5+x_1x_3x_4x_5+x_1x_2x_4x_6+\\
 {} &{}&     x_2x_3x_4x_5x_6.
\ee
}
\par
For this new network, the component to which the fixed point $000011$ belongs remains size $4$, the component to which the fixed point $000101$ belongs is now size $20$, and the component to which the fixed point $001100$ belongs is now size $40$.
\par\medskip
\section{Concluding remark}
\par
We described a method of deriving the dynamics of a Boolean network given in the form of a polynomial function using the disjunctive normal forms of the coordinate functions. This method can be used to construct Boolean networks with prescribed attractors and transients. The change of a Boolean network from its polynomial presentation to its disjunctive normal form is a change of bases procedure, since both the set of monomials and the set of atoms are bases of the vector space $F_2[\mathbf{x}]$ over $F_2$. However, the matrix of interchanging these two bases is of size $2^n\times 2^n$. Our method takes advantage of the fact that many Boolean networks have relatively small support compare to the number $2^n$, and makes computations involving only the support of the network. Another method of deriving the dynamics of a Boolean network is to use the truth table \cite{Pal1} (enumeration of the state space $S(f)$ using a table), which gives all the information about the corresponding Boolean network. In the worst case, when the support of a Boolean network is the whole space, our method involves the computation of $n$ functions and all $2^n$ points of the entire space, which then is equivalent to working with the whole truth table. Since the problem of linking the dynamics to the structure of a Boolean function is NP-hard in general, for application purpose, developing new algorithms that are not necessary polynomial time in $n$ but effective for some special classes is desirable.

\end{document}